\makeatletter\patchcmd{\@spthm}{\phantomsection}{}{}{}}
\let\doendproof\endproof
\renewcommand\endproof{~\hfill$\qed$\doendproof}
\title{A Stronger Lower Bound on Parametric Minimum Spanning Trees}
\author{David Eppstein}
\institute{Computer Science Department, University of California, Irvine}
\date{ }
\begin{document}

\maketitle
\begin{abstract}
We prove that, for an undirected graph with $n$ vertices and $m$ edges, each labeled with a linear function of a parameter $\lambda$, the number of different minimum spanning trees obtained as the parameter varies can be $\Omega(m\log n)$.
\end{abstract}

\section{Introduction}

In the \emph{parametric minimum spanning tree problem}~\cite{Gus-HCGT-79}, the input is a graph $G$ whose edges are labeled with linear functions of a parameter $\lambda$. For any value of $\lambda$, one can obtain a spanning tree $T_\lambda$ as the minimum spanning tree of the weight functions, evaluated at $\lambda$.  Varying $\lambda$ continuously from $-\infty$ to $\infty$ produces in this way a discrete sequence of trees, each of which is minimum within some range of values of $\lambda$. How many different spanning trees can belong to this sequence, for a worst case graph, and how can we construct them all efficiently? Known bounds are that the number of trees in a graph with $n$ vertices and $m$ edges can be $\Omega\bigl(m\alpha(n)\bigr)$ (where $\alpha$ is the inverse Ackermann function)~\cite{Epp-DCG-98} and is always $O(mn^{1/3})$~\cite{Dey-DCG-98}; both bounds date from the 1990s and, although far apart, have not been improved since. The sequence of trees can be constructed in time $O(mn\log n)$~\cite{FerSluEpp-NJC-96} or in time $O(n^{2/3}\log^{O(1)} n)$ per tree~\cite{AgaEppGui-FOCS-98}; faster algorithms are also known for planar graphs~\cite{FerSlu-TCS-97} or for related optimization problems that construct only a single tree in the parametric sequence~\cite{KatTok-SoCG-01,Cha-SODA-05}. In this paper we improve the 25-year-old lower bound on the number of parametric minimum spanning trees from $\Omega\bigl(m\alpha(n)\bigr)$ to $\Omega(m\log n)$.

A broad class of applications of this problem involves bicriterion optimization, where each edge of a graph has two real weights of different types (say, investment cost and eventual profit) and one wishes to find a tree optimizing a nonlinear combination of the sums of these two weights (such as the ratio of total profit to total investment cost, the return on the investment). Each spanning tree of $G$ may be represented by a planar point whose Cartesian coordinates are the sums of its two kinds of weights, giving an exponentially large cloud of points, one per tree. The convex hull of this point cloud has as its vertices the parametric minimum spanning trees (and maximum spanning trees) for linear weight functions obtained from the pair of weight values on each edge by using these values as coefficients. (Essentially, this construction of weight functions from pairs of weights is a form of projective duality transforming points into lines, and the equivalence between the convex hull of the points representing trees into the lower envelope of lines representing their total weight is a standard reflection of that projective duality.)  Any bicriterion optimization problem that can be expressed as maximizing a quasiconvex function (or minimizing a quasiconcave function) of the two kinds of total weight automatically has its optimum at a convex hull vertex, and can be solved by constructing the sequence of parametric minimum spanning trees and evaluating the combination of weights for each one~\cite{Kat-IEICE-92}. Other combinatorial optimization problems that have been considered from the same parametric and bicriterion point of view include shortest paths~\cite{Car-TR-84,Eri-SODA-10,ChaFisLac-STACS-10,CasLabVio-Nw-17}, optimal subtrees of rooted trees~\cite{CarEpp-SWAT-06}, minimum-weight bases of matroids~\cite{Epp-DCG-98}, minimum-weight closures of directed graphs~\cite{Epp-TALG-18}, and the knapsack problem~\cite{Ebe-MS-96,GiuHalRuz-IPL-17,HolKru-IPL-17}.

The main idea behind our new lower bound is a recursive construction of a family of graphs (more specifically, 2-trees), formed by repeated replacement of edges by triangles (\autoref{fig:triangulate}). We also determine the parametric weight functions of these graphs by a separate recursive construction (\autoref{fig:recursion}). However, this only produces an $\Omega(n\log n)$ lower bound, because for a graph constructed in this way with $n$ vertices, the number of edges is $2n-3$, only a constant factor larger than the number of vertices. To obtain our claimed $\Omega(m\log n)$ lower bound we use an additional packing argument, in which we find a dense graph containing many copies of our sparse lower bound construction, each contributing its own subsequence of parametric minimum spanning trees to the total.

\section{Background and preliminaries}

The \emph{minimum spanning tree} of a connected undirected graph with real-valued edge weights is a tree formed as a subgraph of the given graph, having the minimum possible total edge weight. As outlined by Tarjan~\cite{Tar-83}, standard methods for constructing minimum spanning trees are based on two rules, stated most simply for the case when all edge weights are distinct. The \emph{cut rule} concerns cuts in the graph, partitions of the vertices into two subsets; an edge \emph{spans} a cut when its two endpoints are in different subsets. The cut rule states that (for distinct edge weights) the minimum-weight edge spanning any given cut in a graph belongs to its unique minimum spanning tree. The \emph{cycle rule}, on the other hand, states that (again for distinct edge weights) the maximum-weight edge in any cycle of the graph does not belong to its unique spanning tree. One consequence of these rules is that the minimum spanning tree depends only on the sorted ordering of the edge weights, rather than on more detailed properties of their numeric values.

An input to the \emph{parametric minimum spanning tree} problem consists of an undirected connected graph whose edges are labeled with linear functions of a parameter $\lambda$ rather than with real numbers. For any value of $\lambda$, plugging $\lambda$ into these functions produces a system of real weights for the edges, and therefore a minimum spanning tree $T_\lambda$. Different values of $\lambda$ may produce different trees, and the task is either to obtain a complete description of which tree is minimum for each possible value of $\lambda$ or, in some versions of the problem, to find a value $\lambda$ and its tree optimizing another objective function.

If we plot the graphs of the linear functions of a parametric minimum spanning tree instance, as lines in the $(\lambda,\text{weight})$ plane, then the geometric properties of this \emph{arrangement of lines} are closely related to the combinatorial properties of the parametric minimum spanning tree problem. If no two edges have the same weight function, then all edge weights will be distinct except at a finite set of values of $\lambda$, the $\lambda$-coordinates of points where two lines in the arrangement cross. As $\lambda$ varies continuously, the sorted ordering of the weights will remain unchanged except when $\lambda$ passes through one of these crossing points, where the set of lines involved in any crossing will reverse their weight order. It follows from these considerations that the sequence of parametric minimum spanning trees is finite, and that these trees change only at certain \emph{breakpoints} which are necessarily the $\lambda$-coordinates of crossings of lines. In particular, $m$ lines have $O(m^2)$ crossings and there can be only $O(m^2)$ distinct trees in the sequence of parametric minimum spanning trees. However, a stronger bound, $O(mn^{1/3})$, is known~\cite{Dey-DCG-98}.

The worst-case instances of the parametric minimum spanning tree problem, the ones with the most trees for their numbers of edges and vertices, have distinct edge weight functions whose arrangement of lines has only \emph{simple crossings}, crossings of exactly two lines. For, in any other instance, perturbing the edge weight functions by a small amount will preserve the ordering of weights away from the crossings of its lines, and therefore will preserve its sequence of trees away from these crossings, while only possibly increasing the number of breakpoints near perturbed crossings of multiple lines, which become multiple simple crossings. For an instance in which the lines have only simple crossings, the only possible change to the minimum spanning tree at a breakpoint is a \emph{swap}, a change to the tree in which one edge (corresponding to one of the two crossing lines at a simple crossing) is removed, and the other edge (corresponding to the other of the two crossing lines) is added in its place. For details on this correspondence between the geometry of line arrangements and the sequence of parametric minimum spanning trees, and generalizations of this correspondence to other matroids than the matroid of spanning trees, see our previous paper on this topic~\cite{Epp-DCG-98}.

\section{Replacing edges by triangles}

\begin{figure}[t]
\includegraphics[width=\textwidth]{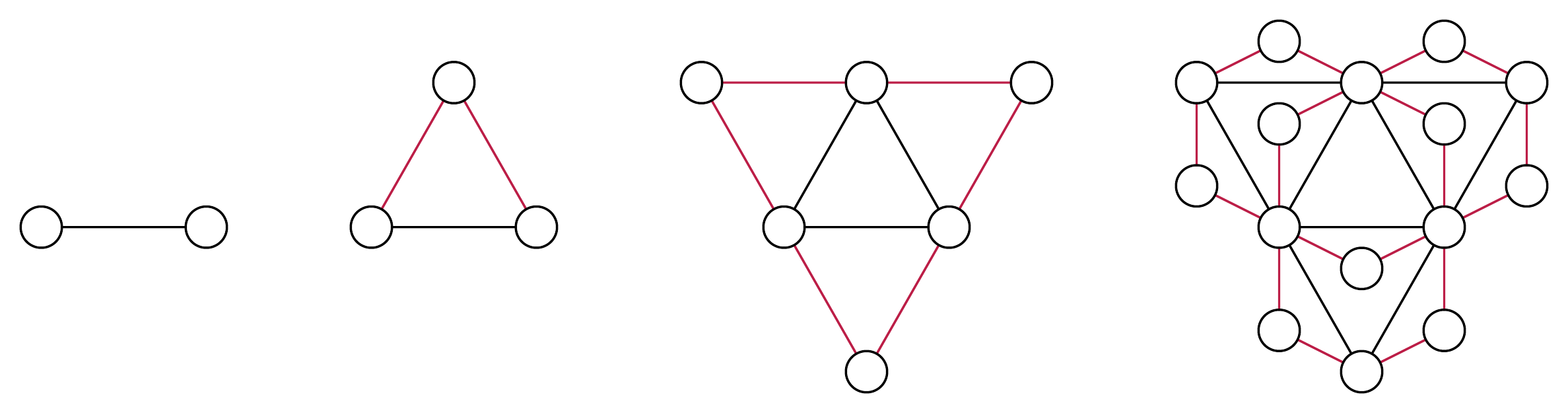}
\caption{Recursively constructing a family of 2-trees $T_i$ (here, $i=0,1,2,3$ in left-to-right order) by repeatedly replacing every edge of $T_{i-1}$ by a triangle.}
\label{fig:triangulate}
\end{figure}

A \emph{2-tree} is a graph obtained from the two-vertex one-edge graph $K_2$ by repeatedly adding new degree-two vertices, adjacent to pairs of adjacent earlier vertices. Equivalently, they are obtained by repeatedly replacing edges by triangles. These graphs are planar and include the maximal outerplanar graphs~\cite{Mit-IPL-79}; their subgraphs are the \emph{partial 2-trees}, graphs of treewidth $\le 2$~\cite{WalCol-Nw-83}. The graphs we use in our lower bound are a special case of this construction where we apply this edge replacement process simultaneously to all edges in a smaller graph of the same type. We define the first graph $T_0$ in our sequence of graphs to be the graph $K_2$, and then for all $i>0$ we define $T_i$ to be the graph obtained by replacing all edges of $T_{i-1}$ by triangles. It seems natural to call these \emph{complete 2-trees}, by analogy to complete trees (whose leaves are repeatedly replaced by stars for a given number of levels) but we have been unable to find this usage in the literature. The graphs $T_i$ for $i\le 3$ are depicted in \autoref{fig:triangulate}.

\begin{lemma}
\label{lem:t-size}
$T_i$ has $3^i$ edges and $(3^i+3)/2$ vertices.
\end{lemma}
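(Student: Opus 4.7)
The proof is a straightforward induction on $i$, working out two coupled recurrences on the edge count $E_i$ and vertex count $V_i$ of $T_i$.

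First I would handle the edge count. When we replace each edge $e$ of $T_{i-1}$ by a triangle, the edge $e$ itself disappears and is replaced by three new edges (the three sides of the triangle on $e$ and its new apex). No two of these triangles share an edge, because any shared edge would have to be one of the two sides incident to a new apex, and each new apex belongs to only one triangle. Thus $E_i = 3 E_{i-1}$, and since $T_0 = K_2$ has $E_0 = 1$, we obtain $E_i = 3^i$.

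Next I would handle the vertex count. The replacement operation introduces exactly one new vertex per edge of $T_{i-1}$ (the apex of the corresponding triangle) while retaining all old vertices. Hence $V_i = V_{i-1} + E_{i-1} = V_{i-1} + 3^{i-1}$, with $V_0 = 2$. Unrolling,
\[
V_i = 2 + \sum_{j=0}^{i-1} 3^{j} = 2 + \frac{3^i - 1}{2} = \frac{3^i + 3}{2},
\]
as claimed. Alternatively, I could verify the closed form $(3^i+3)/2$ directly by induction: the base case $i=0$ gives $(1+3)/2 = 2$, and the inductive step gives $V_i = (3^{i-1}+3)/2 + 3^{i-1} = (3 \cdot 3^{i-1} + 3)/2 = (3^i + 3)/2$.

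No step here presents a real obstacle; the only mild subtlety worth stating explicitly is the claim that the new triangles in a single replacement step are edge-disjoint (so that $E_i = 3 E_{i-1}$ rather than something smaller), which follows immediately from the fact that each triangle's two ``new'' sides are incident to its freshly introduced apex, a vertex of degree exactly two in $T_i$ that belongs to no other triangle of this step.
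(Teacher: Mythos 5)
Your proof is correct and follows essentially the same route as the paper's: the edge count triples with each replacement step, and the vertex count satisfies $V_i = V_{i-1} + 3^{i-1}$ since each edge of $T_{i-1}$ contributes one new apex, unrolled to $(3^i+3)/2$. The extra remark about edge-disjointness of the new triangles is a harmless elaboration the paper leaves implicit.
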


\begin{proof}
The bound on the number of edges follows from the fact that each replacement of edges by triangles triples the number of edges. The bound on the number of vertices follows easily by induction on~$i$, using the observations that each edge of $T_{i-1}$ leads to a newly added vertex in $T_i$ and that $(3^{i-1}+3)/2+3^{i-1}=(3^i+3)/2$.
\end{proof}

What happens when we replace an edge by a triangle in a parametric spanning tree problem? For a non-parametric minimum spanning tree, the answer is given by the following lemma.

\begin{lemma}
\label{lem:triangulated-mst}
Let graph $G$ contain edge $pq$, and replace this edge by a triangle $pqr$ to form a larger graph $G^+$. Suppose that the edges in $G^+$ have distinct edge weights, and use these weights to assign weights to the edges in $G$, with the following exception: in $G$, give edge $pq$ the weight of the \emph{bottleneck} edge in triangle $pqr$ (the maximum-weight edge on path from $p$ to $q$ in the minimum spanning tree of the graph of the triangle) instead of the weight of $pq$. Then, the minimum spanning tree of $G+$ has the same set of edge weights as the minimum spanning tree of $G$, together with the minimum weight of a non-bottleneck edge in triangle $pqr$.
\end{lemma}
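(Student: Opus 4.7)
My plan is to contract a single triangle edge that must appear in every $\mathrm{MST}$ of $G^+$, and observe that the resulting problem coincides --- after the standard parallel-edge simplification --- with the $\mathrm{MST}$ problem on $G$ with $pq$ reweighted as specified. Let $\beta$ denote the triangle bottleneck and $\mu$ the minimum weight of a non-bottleneck triangle edge. The forced edge will be $e^{\ast}$, defined as whichever of $\{pr,qr\}$ has smaller weight; its weight is $\mu$, because the lighter of $pr,qr$ is never the triangle's bottleneck and so must lie among the non-bottleneck edges of the triangle.

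First I would apply the cut rule to the cut $\{r\}$ versus $V(G^+)\setminus\{r\}$. The only edges of $G^+$ crossing this cut are $pr$ and $qr$, so the cut rule forces $e^{\ast}$ into every $\mathrm{MST}$ of $G^+$. Contracting $e^{\ast}$ in $G^+$ merges $r$ into one of $p,q$ and leaves the remaining triangle edge --- the heavier of $\{pr,qr\}$ --- as a parallel edge to $pq$ between $p$ and $q$. The standard multigraph $\mathrm{MST}$ reduction keeps only the lighter of these two parallel edges, producing a simple graph on $V(G)$ whose edges are those of $G$ with $pq$ carrying weight $\min\bigl(w(pq),\,w(\text{heavier of }pr,qr)\bigr)$.

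A brief case split on whether $pq$ is the maximum-weight edge of the triangle then shows that this minimum is exactly $\beta$: if $pq$ is the triangle maximum then it is excluded from the triangle's $\mathrm{MST}$, the path from $p$ to $q$ in that $\mathrm{MST}$ is $p$-$r$-$q$, and the heavier of $\{pr,qr\}$ is the bottleneck; otherwise $pq$ lies in the triangle's $\mathrm{MST}$, the path from $p$ to $q$ is the single edge $pq$, and $pq$ is itself the bottleneck. Either way the contracted and simplified graph is exactly the reweighted $G$ of the lemma, and since $e^{\ast}$ is forced into $\mathrm{MST}(G^+)$, the edge-weight multiset of $\mathrm{MST}(G^+)$ is $\{\mu\}$ together with that of the reweighted $\mathrm{MST}(G)$. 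The only step needing genuine care is this case analysis verifying that the surviving parallel-edge weight equals $\beta$; the rest of the argument is just the cut rule followed by parallel-edge simplification.
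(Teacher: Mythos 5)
Your proposal is correct and follows essentially the same route as the paper's proof: apply the cut rule to the cut isolating $r$ to force the lighter of $pr,qr$ (which is the minimum non-bottleneck edge) into the minimum spanning tree, contract it, reduce the resulting parallel pair to its lighter member, and check by the same case split on whether $pq$ is the triangle maximum that the surviving weight is the bottleneck weight. The only cosmetic difference is that you phrase the surviving parallel-edge weight as $\min\bigl(w(pq), w(\text{heavier of }pr,qr)\bigr)$ and verify it equals $\beta$ afterward, whereas the paper identifies the bottleneck edge up front; the arguments are otherwise identical.
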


\begin{proof}
If $pq$ is the heaviest edge in $pqr$ then the path from $p$ to $q$ in the minimum spanning tree of $pqr$ passes through $r$, the bottleneck edge is the heavier of the two edges on this path, and the minimum non-bottleneck edge is the lighter of its two edges. Otherwise, $pq$ is the bottleneck edge and again the minimum non-bottleneck edge is the lighter of the two remaining edges incident to $r$. Applying the cut rule to the cut separating $r$ from the rest of the graph shows that the minimum non-bottleneck edge is an edge of the minimum spanning tree of $G^+$. Since we did not include its edge weight in the weights for $G$, its weight is not included in the set of edge weights of the minimum spanning tree for $G$.

Contracting this minimum non-bottleneck edge in $G^+$ produces a multigraph with two copies of edge $pq$, the lighter of which is the bottleneck edge. Therefore, if we keep only the lighter of the two edges, we obtain the weighting on $G$ as a contraction of a minimum spanning tree edge in $G^+$. This contraction preserves the set of remaining minimum spanning tree weights, as the lemma states.
\end{proof}

\begin{figure}[t]
\includegraphics[width=\textwidth]{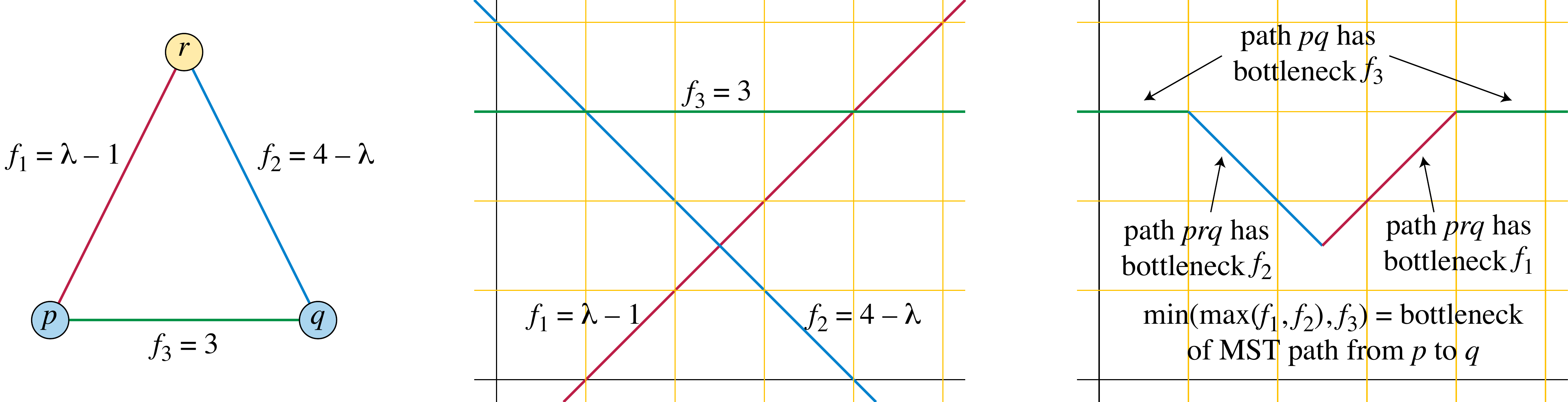}
\caption{A parametric spanning tree problem on a single triangle $pqr$, and the graph of the bottleneck edge weight on the path from $p$ to $q$ in the parametric spanning tree, as a function of the parameter $\lambda$.}
\label{fig:zigzag}
\end{figure}

It follows that in the parametric case, replacing an edge $pq$ by a triangle $pqr$, with linear parametric weights on each triangle edge, causes that edge to behave as if it has a nonlinear piecewise linear weight function attached to it, the function mapping the parameter $\lambda$ to the bottleneck weight from $p$ to $q$ in triangle $pqr$. \autoref{fig:zigzag} shows an example of three parametric weights on a triangle $pqr$ and this bottleneck weight function, with the weights chosen so that the function has three breakpoints. Clearly, we can perturb these three weight functions within small neighborhoods of their coefficients, and obtain a qualitatively similar bottleneck weight function.

\section{Weighted 2-trees}

\begin{figure}[t]
\includegraphics[width=\textwidth]{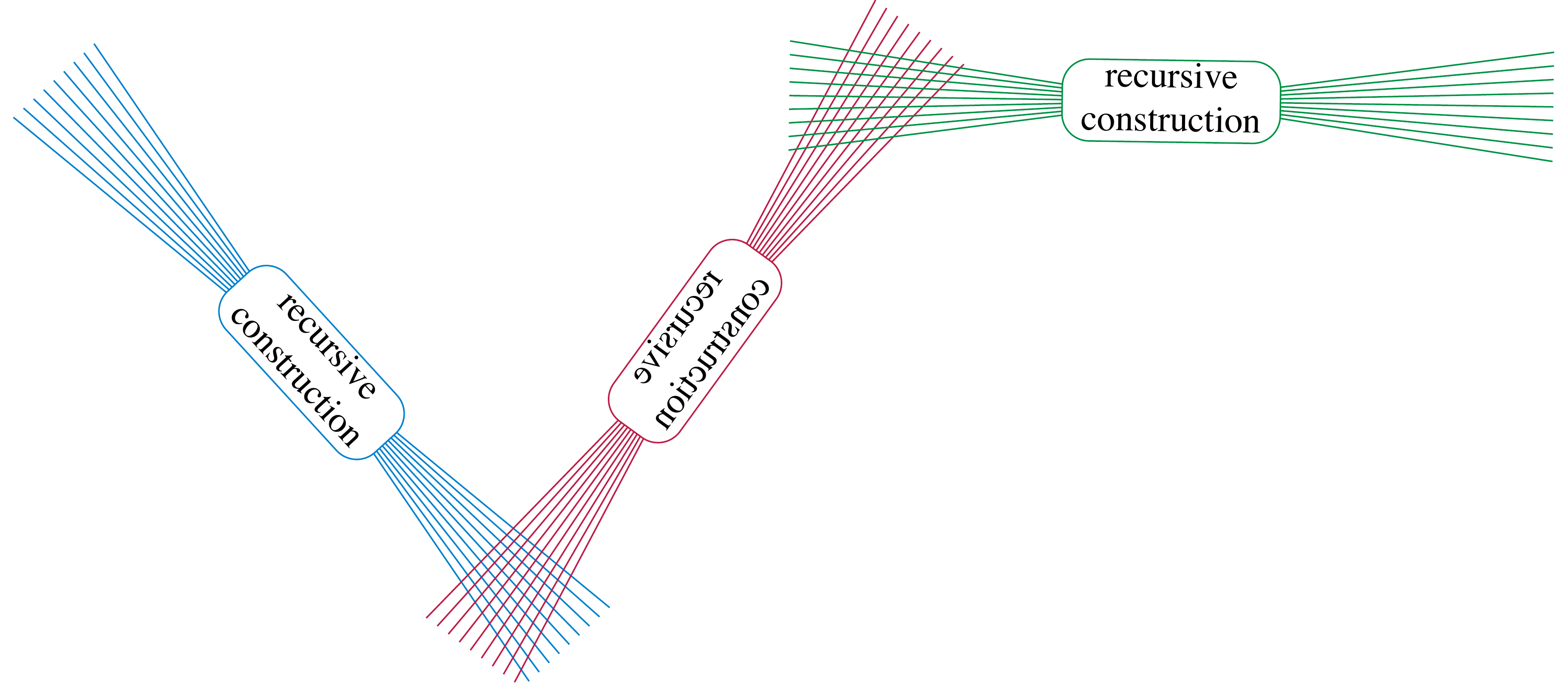}
\caption{Recursive construction for the parametric weight functions of the graphs $T_i$, shown here as an arrangement of lines in a plane whose horizontal coordinate is the parameter $\lambda$ and whose vertical coordinate is the edge weight at that parameter value. The reversed text in the central recursive construction indicates that the construction is reversed left-to-right relative to the other two copies.}
\label{fig:recursion}
\end{figure}

We now describe how to assign parametric weights to the edges of $T_i$ to obtain our $\Omega(n\log n)$ lower bound.
As a base case, we may use any linear function as the weight of the single edge of $T_0$; it can have only one spanning tree, regardless of this choice. For $T_i$, with $i>0$, we perform the following steps to assign its weights:
\begin{itemize}
\item Construct the weight functions for the edges of $T_{i-1}$, recursively.
\item Apply a linear transformation to the parameter of these weight functions (the same transformation for each edge) so that, in the arrangement of lines representing the graphs of these weight functions, all crossings occur in the interval $[0,1]$ of $\lambda$-coordinates. Additionally, scale these weight functions by a sufficiently small factor $\epsilon$ so that, within this interval, they are close enough to the $\lambda$-axis, for a meaning of ``close enough'' to be specified below.
\item Construct $T_i$ by replacing each edge $pq$ in $T_{i-1}$ by a triangle $pqr$, with a new vertex for each triangle. Color the three edges of each triangle red, blue, and green, as in \autoref{fig:zigzag}(left), with $pq$ colored green and the other two edges colored red and blue (choosing arbitrarily which one to color red and which one to color blue).
\item Give each edge of $T_i$ a transformed copy of the weight function of the corresponding edge of $T_{i-1}$, transformed as follows:
\begin{itemize}
\item For a green edge $pq$, corresponding to an edge of $T_{i-1}$ with weight function $f(\lambda)$, give $pq$ the weight function $f(\lambda-4.5)+3$. This transformation shifts the part of the weight function where the crossings with other green edges occur to be close to the right green segment of \autoref{fig:zigzag}(right).
\item For a red edge $pr$, corresponding to an edge $pq$ of $T_{i-1}$ with weight function $f(\lambda)$, give $pr$ the weight function $f(3.75-\lambda)+\lambda-1$. This transformation shifts the part of the weight function where the crossings with other red edges occur to be close to the red segment of \autoref{fig:zigzag}(right), and (by negating $\lambda$ in the argument to $f$) reverses the ordering of the crossings within that region.
\item For a blue edge $qr$, corresponding to an edge $pq$ of $T_{i-1}$ with weight function $f(\lambda)$, give $qr$ the weight function $f(\lambda-1.25)+4-\lambda$. This transformation shifts the part of the weight function where the crossings with other red edges occur to be close to the blue segment of \autoref{fig:zigzag}(right).
\end{itemize}
\item Perturb all of the weight functions, if necessary, so that all crossings of two weight functions have different $\lambda$-coordinates, without changing the left-to-right ordering of the crossings between any one weight function and the rest of them.
\end{itemize}

This construction is depicted schematically, in the $(\lambda,\text{weight})$ plane, in \autoref{fig:recursion}. We are now ready to define what it means for the weight scaling factor $\epsilon$ to be small enough, so that the scaled weight functions are ``close enough'' to the $\lambda$ axis: as shown in the figure, the left-to-right ordering of the crossings of the lines graphing the weight functions should be:
\begin{enumerate}
\item All crossings of blue with green lines
\item All crossings of two blue lines, in one copy of the recursive construction
\item All crossings of blue with red lines
\item All crossings of two red lines, in a second (reversed) copy of the recursive construction
\item All crossings of red with green lines
\item All crossings of two green lines, in the third copy of the recursive construction
\end{enumerate}
Our construction automatically places all monochromatic crossings into disjoint unit-length intervals with these orderings.
The bichromatic crossings of \autoref{fig:zigzag} are separated from these unit-length intervals by a horizontal distance of at least $0.25$, and sufficiently small values of $\epsilon$ will cause the bichromatic crossings of $T_i$ to be close to the positions of the crossings with the same color in \autoref{fig:zigzag}. Therefore, by choosing $\epsilon$ small enough, we can ensure that the crossing ordering described above is obtained. \autoref{fig:sun} depicts this construction for $T_2$.

We observe that, within each of the unit-length intervals containing a copy of the recursive construction,
the bottleneck edges for each triangle $pqr$ in the construction of $T_i$ are exactly the ones of the color for that copy of the recursive construction, and that within these intervals, the minimum non-bottleneck edge in each triangle does not change. Therefore, by \autoref{lem:triangulated-mst}, the changes in the sequence of parametric minimum spanning trees within these intervals exactly correspond to the changes in the trees of $T_{i-1}$ from the recursive construction.

\begin{figure}[t]
\includegraphics[width=\textwidth]{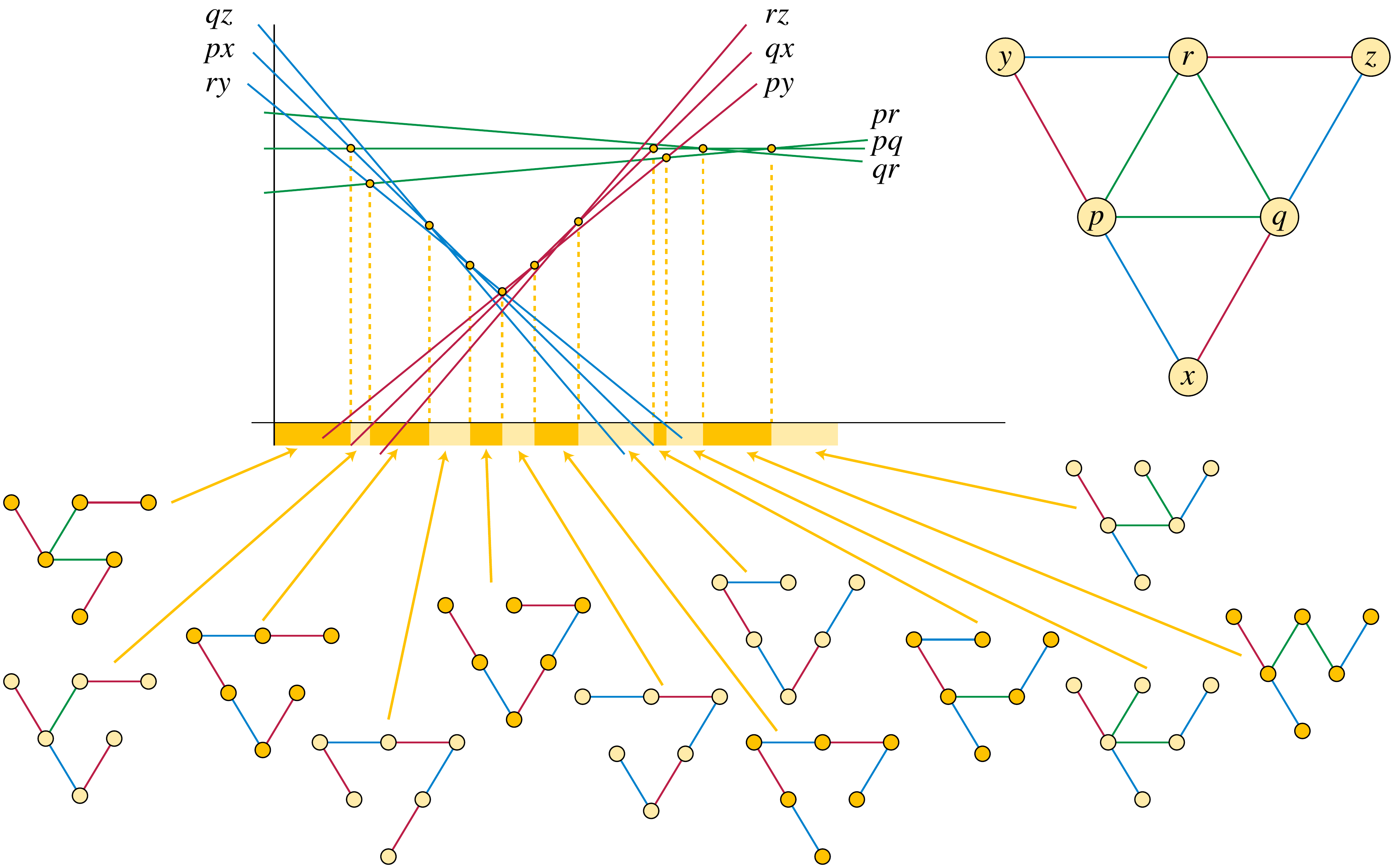}
\caption{$T_2$ (upper right) as parametrically weighted in our construction, with the graphs of each weight function shown as lines in the $(\lambda,w)$ plane (upper right), and the resulting sequence of 12 parametric minimum spanning trees (bottom). The marked yellow crossings of pairs of lines correspond to breakpoints in the sequence of trees.}
\label{fig:sun}
\end{figure}

\begin{lemma}
\label{lem:recurrence}
For weights constructed as above, the number of distinct parametric minimum spanning trees for $T_i$ is at least as large as
\[
N(i)=\frac{i3^i}{2}+\frac{3^i+3}{4}.
\]
\end{lemma}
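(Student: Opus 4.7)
My plan is to induct on $i$, targeting the recurrence $N(i) = 3N(i-1) + \tfrac{3^i - 3}{2}$, which together with the base case $N(0) = 1$ (immediate since $T_0$ is a single edge) yields the claimed closed form.

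For the inductive step I would use the decomposition already set up in the paragraph following \autoref{lem:triangulated-mst}: within each of the three monochromatic unit-length intervals, the MST sequence of $T_i$ tracks that of $T_{i-1}$, contributing $N(i-1)-1$ MST breakpoints per interval and hence $3(N(i-1)-1)$ in total. To see that the resulting $3N(i-1)$ interval MSTs are pairwise distinct (for $i \geq 2$), I would examine the realization correspondence used in the proof of \autoref{lem:triangulated-mst}: throughout the blue interval the green edge of every triangle is the triangle-local maximum, so each realized $T_{i-1}$-MST edge becomes a blue edge of $T_i$ while each min non-bottleneck edge is red; the red interval analogously produces red realized edges with blue min non-bottlenecks; and the green interval falls into the complementary case, with green realized edges and blue min non-bottlenecks. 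The three resulting color-multisets of the $T_i$-MSTs differ pairwise as soon as $(3^{i-1}+1)/2 \neq 3^{i-1}$, i.e., for $i \geq 2$, so all $3N(i-1)$ interval MSTs are distinct; the base case $i = 1$ is handled by direct inspection of the three MSTs of a single triangle.

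The extra $\tfrac{3^i+1}{2}$ breakpoints required by the recurrence must come from the bichromatic transition regions. A cross-triangle bichromatic crossing changes no triangle's bottleneck or min non-bottleneck edge and so cannot be an MST breakpoint, and among a triangle's three same-triangle crossings only its blue-green and red-green crossings lie on the upper envelope of its three weight lines; at each of these the triangle flips between the two cases of \autoref{lem:triangulated-mst} and its contribution to the $T_i$-MST typically swaps a green edge for a non-green one. The hard part, I expect, is the careful bookkeeping needed to show that exactly $\tfrac{3^i+1}{2}$ of these upper-envelope crossings (rather than the naive $2 \cdot 3^{i-1}$) produce genuinely new breakpoints: at a transition within a triangle whose $T_{i-1}$-edge is not currently in the reweighted MST, the change in min non-bottleneck can leave the $T_i$-MST unchanged. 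I would carry out this counting inductively, using the inductive hypothesis to track which $T_{i-1}$-edges lie in the reweighted MST at each sub-region of the bichromatic zones, and thereby reach the required total.
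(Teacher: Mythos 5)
Your setup is right: the recurrence $N(i)=3N(i-1)+\tfrac{3^i-3}{2}$, the base case, and the $3(N(i-1)-1)$ breakpoints contributed by the three monochromatic intervals all match the paper's proof (the distinctness digression is harmless but unnecessary: the minimum spanning tree weight is a concave piecewise-linear function of $\lambda$, so no tree can recur in the sequence and counting breakpoints suffices). The gap is in the bichromatic regions, where $(3^i+1)/2$ further breakpoints are needed. First, you discard a triangle's red--blue crossing because it is not on the upper envelope of its three lines, but the events relevant to \autoref{lem:triangulated-mst} are changes of the bottleneck edge (which changes at all three pairwise crossings, not only the two on the upper envelope) and of the minimum non-bottleneck edge (the lowest of the three lines, which changes exactly at the red--blue crossing). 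At that crossing the bottleneck and minimum non-bottleneck swap colors, so every triangle whose $T_{i-1}$-edge lies outside the current minimum spanning tree must trade its contributed red edge for a blue one: those are $(3^{i-1}-1)/2$ forced breakpoints you have thrown away. Without them your remaining candidates cannot reach the target: the blue--green and red--green regions each force only $(3^{i-1}+1)/2$ swaps (one per tree edge of $T_{i-1}$), totalling $3^{i-1}+1<(3^i+1)/2$ for $i\ge 2$.

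Second, your plan to certify crossings one at a time --- deciding which upper-envelope crossings are ``genuine'' breakpoints by tracking the reweighted minimum spanning tree through each region --- is exactly the bookkeeping the paper avoids, and your blanket claim that cross-triangle bichromatic crossings cannot be breakpoints is unjustified: a crossing between the current bottleneck edges of two different triangles reorders the reweighted $T_{i-1}$ weights and could in principle change its minimum spanning tree. The paper's argument is global rather than crossing-by-crossing: compare the minimum spanning trees of $T_i$ immediately before and after each bichromatic region. The reversal of the red copy guarantees that both sides realize the same tree of $T_{i-1}$, so the two trees of $T_i$ differ in a known number of edges --- all $(3^{i-1}+1)/2$ green bottleneck edges must be expelled across the green--blue and the green--red regions, and $(3^{i-1}-1)/2$ minimum non-bottleneck edges must flip color across the red--blue region --- and since each breakpoint is a single-edge swap, each region must contain at least that many breakpoints no matter what happens inside it. Replacing your per-crossing accounting with this exchange argument closes the gap.
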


\begin{proof}
We prove by induction on $i$ that the number of trees is at least as large as the solution to the recurrence
\[
N(i) = 3N(i-1) + \frac{3^i-3}{2}.
\]
To prove this, it is easier to count the number of breakpoints, values of $\lambda$ at which the tree structure changes; the number of trees is the number of breakpoints plus one. In each copy of the recursive construction, this number of breakpoints is exactly $N(i-1)-1$, so the total number of breakpoints appearing in these three copies is $3N(i-1)-3$.

Additional breakpoints happen within the ranges of values for $\lambda$ at which (in the $(\lambda,\text{weight})$ plane) pairs of lines of two different colors cross. Because of the reversal of the red copy of the recursive construction, the minimum spanning trees immediately to the left and right of  these regions of bichromatic crossings correspond to the same trees in $T_{i-1}$: the bottleneck edges that are included in these minimum spanning trees come from the same triangles, but with different colors.  In the regions where the green lines cross lines of other colors, the minimum non-bottleneck edge in each triangle does not change, so each green bottleneck edge in the minimum spanning tree must be exchanged for a red or blue one. Each change to a tree within this crossing region removes a single edge from the minimum spanning tree and replaces it with another single edge, the two edges whose two lines cross at the $\lambda$-coordinate of that change. Therefore, no matter what sequence of changes is performed, to exchange all green bottleneck edges for all red or blue ones requires a number of crossings equal to the number of edges in the minimum spanning tree of $T_{i-1}$, which is $(3^{i-1}+1)/2$ by \autoref{lem:t-size}. We get this number of breakpoints at the region where the green and blue lines cross, and the same number at the region where the red and green lines cross.

The analysis of the number of breakpoints at the region where the blue and red lines cross is similar, but slightly different.
Immediately to the left and right of this region, the the bottleneck edge in each triangle and the minimum non-bottleneck edge in the triangle are red and blue, but in a different order to the left and to the right. Therefore, in triangles where the bottleneck edge is part of the minimum spanning tree (as is always the minimum non-bottleneck edge), nothing changes. However, in triangles where the bottleneck edge is \emph{not} part of the minimum spanning tree, there is a change, to the minimum non-bottleneck edge, from before this crossing region to after it. These triangles correspond to edges of $T_{i-1}$ which do not belong to the minimum spanning tree (for the parameter values in this range), of which there are $(3^{i-1}-1)/2$ by \autoref{lem:t-size}. By the same argument as before, the crossing region must contain at least this many breakpoints.

Adding together the $3N(i-1)-3$ breakpoints from the recursive copies, the $(3^{i-1}+1)/2$ breakpoints for the green--red and green--blue crossing regions,  the $(3^{i-1}-1)/2$ breakpoints for the red--blue crossing region, and $+1$ to convert numbers of breakpoints to numbers of distinct trees, and simplifying, gives the right hand side of the recurrence. A straightforward induction shows that the solution to the recurrence is the formula given in the statement of the lemma.
\end{proof}

For $i=0,1,2,\dots$ the number of trees given by this formula is
\[
1, 3, 12, 48, 183, 669, 2370, 8202, 27885, 93495 \dots
\]
For instance, $T_1$ has three trees with the weighting given in \autoref{fig:zigzag}: the bottleneck function shown in the figure has four linear pieces, but the red and blue pieces both correspond to the same tree, with a different edge on the path $pqr$ as the bottleneck edge. \autoref{fig:sun} shows the 12 trees for $T_2$.

\section{Packing into dense graphs}

The lower bound obtained from \autoref{lem:recurrence} applies only to sparse graphs, where the numbers of vertices and edges are within constant factors of each other. However, we want a bound that applies more generally, for graphs with significantly more edges than vertices. The other direction, for graphs with significantly fewer edges than vertices, is less interesting. To achieve many fewer edges than vertices, it is necessary to allow disconnected graphs, and consider minimum spanning forests instead of minimum spanning trees; but with these modifications one can obtain a lower bound simply by adding isolated vertices to the construction of \autoref{lem:recurrence}.

To achieve many more edges than vertices, we use the following construction for packing many instances of a sparse lower bound graph into a single denser graph. It does not require any detailed knowledge of the structure of the sparse graph.

\begin{figure}[t]
\includegraphics[width=\textwidth]{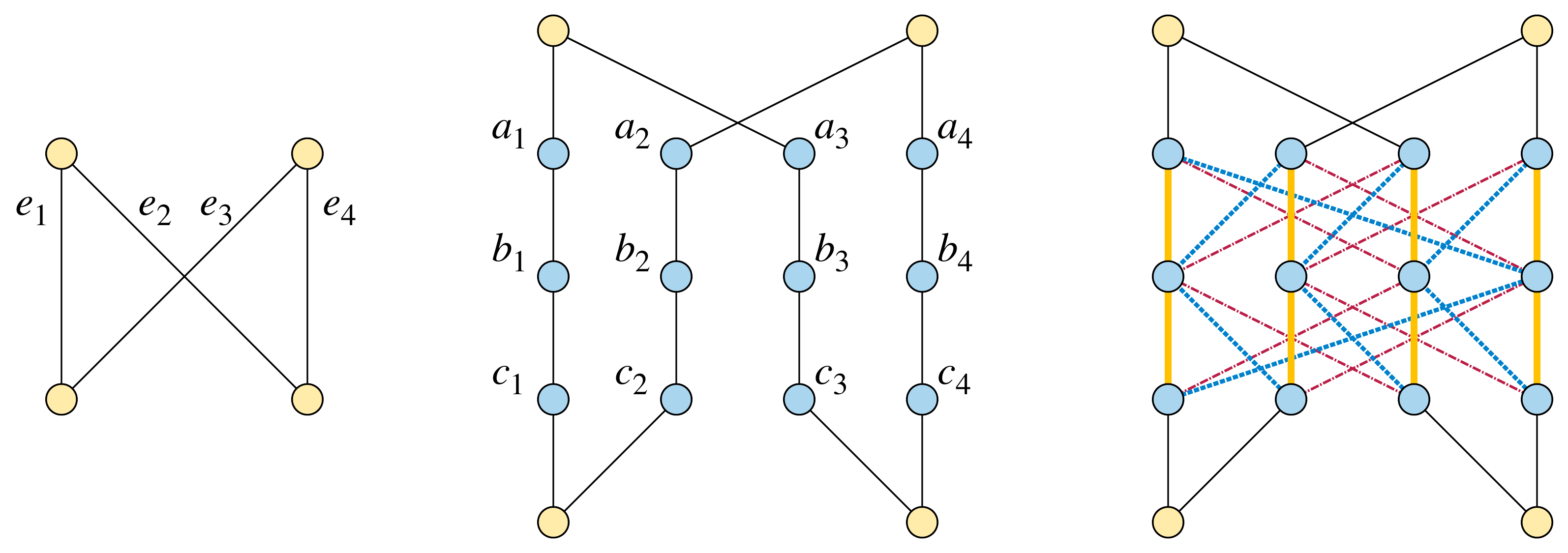}
\caption{The construction of \autoref{lem:pack}, applied to a graph $G$ with four vertices and four edges (left), with the parameter $k=3$. The central graph is a subdivision of each edge of this graph into a four-edge path, with vertices labeled as shown, and the graph on the right is the final construction $H$, with the colors and textures of edges indicating the partition of its edges into four subgraphs $H_0$ (thin black edges), $H_1$ (thick yellow edges), $H_2$ (dotted blue edges), and $H_3$ (dashed red edges).}
\label{fig:pack}
\end{figure}

\begin{lemma}
\label{lem:pack}
Let $G$ be a parametrically weighted graph with $N$ vertices and $M$ edges, whose sequence of parametric minimum spanning trees has length $T$, and let $k$ be a positive integer satisfying $k\le M$. Then there is a parametrically weighted graph $H$ with $N+3M$ vertices and $(2k+2)M$ edges whose sequence of parametric minimum spanning trees has length at least $2kT$.
\end{lemma}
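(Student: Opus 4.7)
The plan is to build $H$ by subdividing each edge of $G$ to create a ``frame'' and then packing $k$ additional copies of $G$'s parametric weight structure into the subdivided graph, relying on an $\epsilon$-scaling argument to keep the copies' activity localized in disjoint intervals of $\lambda$. Concretely, for each edge $e = uv$ of $G$ I would introduce three new vertices $a_e, b_e, c_e$ together with the four subdivision edges $u a_e$, $a_e b_e$, $b_e c_e$, $c_e v$; this accounts for all $N + 3M$ vertices of $H$ and for $4M$ of its edges. The remaining $(2k - 2)M$ edges would be organized into $k$ subgraphs $H_1,\ldots,H_k$ on the subdivision vertices, each realizing a ``copy'' of $G$'s edges, so that the edges of $H$ partition into $H_0,H_1,\ldots,H_k$ with total count $(2k+2)M$.

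The weight assignment follows the template of \autoref{lem:recurrence}. For each copy $j$ I would apply a common linear reparameterization of $\lambda$ to the weights of $G$ and then scale by a sufficiently small $\epsilon > 0$, so that all crossings among lines of copy $j$ fall into a short interval $I_j$ of $\lambda$, with the $I_j$'s pairwise disjoint and arranged left-to-right. A cut/cycle rule analysis in the spirit of \autoref{lem:triangulated-mst} should then show that within each $I_j$ the MST of $H$ is governed by copy $j$ alone: the edges of the other copies and of the frame are either forced in by cut rules or excluded by cycle rules, so the parametric MSTs inside $I_j$ run through the full sequence of $T$ trees of $G$ under copy $j$'s weighting.

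The remaining factor of $2$ in the bound $2kT$ should come from the four-edge subdivision itself, whose effective parametric weight (in the sense of \autoref{lem:triangulated-mst}) can be arranged to trace each linear function of $G$ twice as $\lambda$ sweeps through the active region, analogous to the zigzag of \autoref{fig:zigzag}. Equivalently, each copy of $G$ contributes two ``active'' sub-intervals rather than one, yielding $2T$ distinct MSTs per copy. Summed across the $k$ copies this gives at least $2kT$ distinct parametric MSTs.

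The main obstacle is carrying out the cut/cycle-rule analysis in the generality of an arbitrary graph $G$, rather than the highly structured $T_i$ of \autoref{lem:recurrence}. In particular, I need a careful analysis of the gadget around each edge of $G$ to confirm that its effective parametric weight really does replicate $w_e(\lambda)$ twice per copy, and I need to check that $\epsilon$ and the translation offsets of each copy can be chosen so that the monochromatic crossings (within a copy) and the bichromatic crossings (between copies, or with the frame) interleave in the required left-to-right order.
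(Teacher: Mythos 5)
Your overall strategy matches the paper's: subdivide each edge of $G$ into a four-edge path $u$--$a_i$--$b_i$--$c_i$--$v$, pack $k$ copies of $G$'s weight system into the result, confine each copy's crossings to its own $\lambda$-interval by flattening and translating, and extract $2T$ trees per copy. But the crux of the lemma --- how $k$ edge-disjoint subdivided copies of $G$ can coexist on only $N+3M$ vertices while sharing the frame $H_0$ --- is exactly the part you leave unspecified (``each realizing a copy of $G$'s edges''), and it is not automatic. The paper's construction is a cyclic shift: $H_j$ consists of the edges from $b_i$ to $a_{i+j-1}$ and $c_{i+j-1}$ (indices taken modulo $M$), so that in $H_0\cup H_j$ the edge $e_i=uv$ of $G$ becomes a path $u$--$a_i$--$b_{i'}$--$c_i$--$v$ for a shifted index $i'$, and $H_0\cup H_j$ is again a subdivision of $G$ with every $a$, $b$, $c$ vertex of degree two. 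This is also where the hypothesis $k\le M$ is used (the $k$ shifted indices at each $b_i$ must be distinct), a condition your proposal never invokes. Relatedly, your edge accounting does not close: you allot the $(2k-2)M$ non-subdivision edges to $k$ copies, but $k$ copies of $2M$ edges each would need $2kM$ edges; the resolution is that $H_1$ is not made of new edges at all but reuses the $a_i$--$b_i$ and $b_i$--$c_i$ edges of the subdivision.

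The factor of $2$ is also obtained somewhat differently. You propose that the four-edge path's effective bottleneck weight ``traces each linear function twice,'' which would require an analogue of \autoref{lem:triangulated-mst} for a series gadget and is the step you yourself flag as delicate. The paper instead gives the $a$--$b$ edges of $H_j$ one transformed copy of $G$'s weights and the $b$--$c$ edges a second, independent copy, active in disjoint $\lambda$-ranges; in the range where the $a$--$b$ lines cross one another, every $b$--$c$ edge is uniformly cheaper and is forced into the tree by the cut rule at its degree-two endpoint, so the choice among $a$--$b$ edges reduces exactly to the parametric minimum spanning tree problem on $G$, yielding $T$ trees there and another $T$ in the $b$--$c$ range. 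Your ``equivalently, two active sub-intervals per copy'' is the right picture, and once you commit to it (and to the cyclic-shift wiring above) no graph-specific cut/cycle analysis of $G$ is needed at all; the remaining work is only the routine arrangement of the $k$ flattened weight bundles along a convex chain so that each $H_0\cup H_j$ is cheapest, and spanning, throughout its own interval.
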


\begin{proof}
We construct $H$ from $G$ in the following steps, illustrated in \autoref{fig:pack}.
\begin{itemize}
\item Number the edges of $G$ as $e_0, e_2,\dots e_{M-1}$ arbitrarily.
\item Subdivide each edge $e_i$ of $G$, connecting two vertices $u$ and $v$, into a four-edge path $u$--$a_i$--$b_i$--$c_i$--$v$. (It is arbitrary which vertex of this path we call $a_i$ and which we call $c_i$.)
\item Add additional edges from $b_i$ to $a_j$ and $c_j$, for each $i$ and each $j=i+1,i+2,\dots i+k-1\bmod{M}$.
\end{itemize}
Given this construction, we define subgraphs $H_j$ as follows:
\begin{itemize}
\item $H_0$ consists of all edges connecting vertices of $G$ to new vertices $a_i$ or $c_i$.
\item $H_j$ consists of all edges from $b_i$ to $a_{i+j-1}$ or $c_{i+j-1}$, for all $i$, with indexes taken modulo~$m$.
\end{itemize}
Then, for $i=1,2,\dots k$, the graph $H_0\cup H_i$ is isomorphic to a subdivision of $G$, with $H_0\cup H_1$ being the subdivision we used to construct $H$ and the others obtained in the same way but with permuted connections.

\begin{figure}[t]
\includegraphics[width=\textwidth]{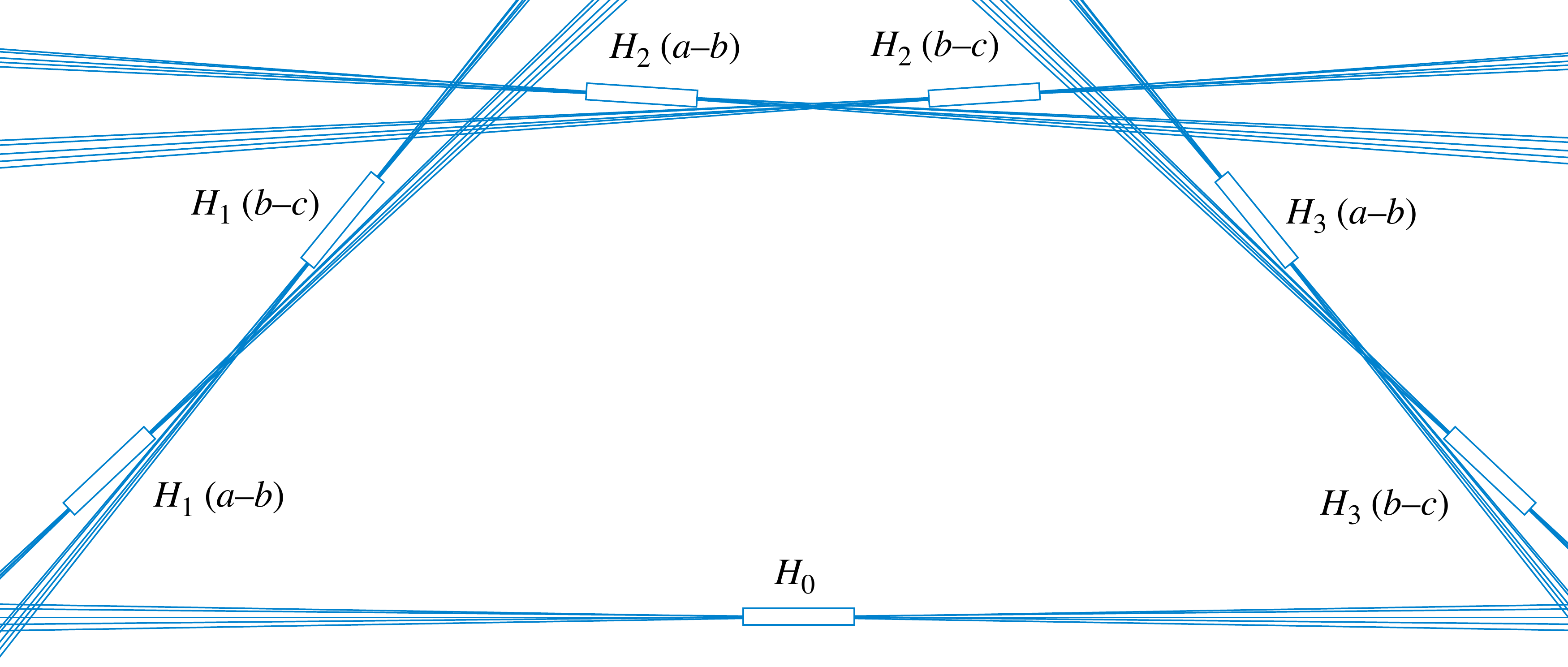}
\caption{An arrangement of lines for the weight functions of \autoref{lem:pack} with $k=3$. The small rectangles indicate transformed neighborhoods of the unit $\lambda$-interval, containing all crossings of the bundle of lines associated with each subgraph.}
\label{fig:dense-weights}
\end{figure}

As in \autoref{lem:recurrence}, we flatten the arrangement of lines for the weighting of $G$ so that its crossings all lie within a small neighborhood of the unit interval of the $\lambda$-axis, without changing its sequence of parametric minimum spanning trees. We then apply linear transformations to the system of weights for the edges in each copy $H_j$ with $j>0$, as detailed below, while using small-enough weights for all edges in $H_0$ so that these edges belong to all minimum spanning trees for parameters in the range covered by the transformed unit intervals. shown in \autoref{fig:dense-weights}. More specifically, for each $j>0$ we use one transformed copy of the weights in $G$ for the $a$--$b$ edges in $H_j$, and a second transformed copy for the $b$--$c$ edges, arranged so that the transformed unit intervals containing the crossings within each copy project to disjoint intervals of the $\lambda$-axis, and so that all crossings of the $a$--$b$ edges appear above all lines for the $b$--$c$ edges and vice versa. Therefore, in the graph $H_0\cup H_j$, the parametric trees in the parameter range where the $a$--$b$ edges cross each other consist of all $b$--$c$ edges (because those have smaller weight than the $a$--$b$ edges in each path) together with a subset of the $a$--$b$ edges corresponding to a spanning tree of $G$. Because we copied and transformed the weights of $G$ for the $a$--$b$ edges in this parameter range, we obtain $T$ distinct trees of this type. To arrange the $a$--$b$ and $b$--$c$ parameter weights for $H_i$ in this fashion, we transform them so that the $a$--$b$ weights lie near the line $w=3-\lambda$ , with crossings in the range $\lambda\in[1,2]$, and so that the $b$--$c$ weights lie near the line $w=\lambda-3$, with crossings in the range $\lambda\in[4,5]$. Then, we transform and flatten these combined weights of $H_i$, so that they again lie near the $\lambda$-axis with all crossings of edges of either type in the range $[0,1]$.

We arrange the sets of lines associated with $H_1$, $H_2$, etc., so that the lines from each $H_j$ pass above the crossings for each other $H_j'$, $j\ne j'$, and so that the range of parameters within which $H_j$ has the lowest lines contains the two subranges where its $a$--$b$ lines cross and where its $b$--$c$ lines cross, again
as shown in the figure. We may do this by finding a convex-downward polygonal chain with $k$ sides (for instance the upper part of a regular $2k$-gon), in which all sides project to a range of $\lambda$-coordinates of more than unit length, and by transforming the weights of each $H_i$ so that the unit interval of the $\lambda$-axis, near which all crossings of these weights occur, is transformed to the interior of one of the sides of this polygonal chain. \autoref{fig:dense-weights} shows the weights for three subgraphs $H_1$, $H_2$, and $H_3$, transformed in this way so that they are near the upper three sides of a hexagon. The weights for $H_0$ can be chosen to be near a horizontal line, below all crossings of the other weight functions, as also shown in the figure.

Therefore, within these subranges, the parametric minimum spanning trees for all of $H$ will be the same as the trees for $H_0\cup H_j$, because $H_0\cup H_j$ spans $H$ and has lower edge weights than any of the remaining edges. With this arrangement, we get $2kT$ distinct parametric minimum spanning trees, $2T$ for each $H_j$ with $j>0$, as well as additional trees that are not counted in the lemma.
\end{proof}

With this, we are ready to prove our main result:

\begin{theorem}
There exists a constant $C$ such that the following is true.
Let $n$ and $m$ be integers with $n>0$ and $2n-3\le m\le\tbinom{m}{2}$.
Then there exists a parametrically weighted graph with $n$ vertices and $m$ edges,
with at least $Cm\log n$ parametric minimum spanning trees.
\end{theorem}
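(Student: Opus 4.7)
The plan is to combine \autoref{lem:recurrence} and \autoref{lem:pack} and then pad the resulting graph with inert edges so that it has exactly $n$ vertices and $m$ edges. Set $M := 3^i$ where $i$ is the largest integer with $M \le (2n-3)/7$; then $M = \Theta(n)$ and $i = \Theta(\log n)$, and the graph $H$ obtained from applying \autoref{lem:pack} to the weighted $T_i$ has $V_H = (3^i+3)/2 + 3\cdot 3^i = (7M+3)/2 \le n$ vertices. For an integer $k$ with $1\le k\le M$ to be chosen below, $H$ has $E_H = (2k+2)M$ edges and at least $2kT \ge kiM$ parametric minimum spanning trees, using the bound $T \ge N(i) \ge iM/2$ from \autoref{lem:recurrence}.

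To reach exactly $n$ vertices and $m$ edges without changing the tree count, I would add $n - V_H$ pendant vertices, each attached to $H$ by an edge of large constant weight that lies in every minimum spanning tree and contributes no new breakpoints, and then fill the remaining edge slots with dummy edges of an even larger constant weight that never lie in any minimum spanning tree. The feasibility constraint that the number of dummy edges be nonnegative rearranges to $(4k-3)M \le 2m - 2n + 3$, so I would take $k$ to be the largest integer satisfying both this inequality and $k\le M$. The hypotheses $m \ge 2n-3$ together with $M \le (2n-3)/7$ make $k=1$ always feasible.

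It then remains to verify that $kiM = \Omega(m\log n)$, and here two regimes arise. When the edge constraint binds, $k \ge (2m-2n+3)/(4M) - 1/4$, so $kiM \ge (m-n)i/2 - iM/4$; since $m \ge 2n-3$ implies $m - n \ge m/2 - 3/2$ and $M \le (2n-3)/7 \le m/7$, this simplifies to $\Omega(mi) = \Omega(m\log n)$. When the packing constraint binds, $k = M$ and $kiM = iM^2 = \Omega(n^2\log n)$, which dominates $m\log n$ since $m \le \binom{n}{2} = O(n^2)$.

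The main obstacle is bookkeeping rather than any new idea: checking that a single constant $C$ works uniformly across the transition between the two regimes, and absorbing the finitely many small-$n$ boundary cases into $C$. Conceptually, the multiplicative factor of $k$ that \autoref{lem:pack} contributes to the tree count converts the sparse $\Omega(M\log M)$ lower bound into $\Omega(kM\log n) = \Omega(m\log n)$, while the choice $M = \Theta(n)$ is essentially forced so that the $3M$ contribution to $V_H$ fits within the $n$-vertex budget.
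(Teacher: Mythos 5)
Your proposal is correct and follows essentially the same route as the paper: take $G=T_i$ with $M=3^i=\Theta(n)$ so that the $N+3M$ vertices of the packed graph fit the vertex budget, weight it via \autoref{lem:recurrence}, pack via \autoref{lem:pack} with $k$ as large as the edge budget (and the constraint $k\le M$) allows, and pad to exactly $n$ vertices and $m$ edges with pendant vertices and heavy inert edges. You are merely more explicit than the paper about the feasibility arithmetic and the two regimes for $k$, which is a fair elaboration rather than a different argument.
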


\begin{proof}
Let $G=T_i$, $N=(3^i+3)/2$, and $M=3^i$, with $i$ chosen as large as possible so that $N+3M\le n$ and $4M\le m$, and choose $k$ as large as possible so that $(2k+2)M\le m$; then $N=\Theta(n)$ and $M=\Theta(m/n)$.
Apply \autoref{lem:recurrence} to give weights to $G$ so that it has $\Omega(n\log n)$ parametric minimum spanning trees, and apply \autoref{lem:pack} to construct a parametrically weighted graph $H$ with $N+3M$ vertices and $(2k+2)M$ edges that has $\Omega(m\log n)$ parametric minimum spanning trees. If necessary, add leaf vertices to $H$ to increase its number of vertices to $n$, and then add high-weight edges to increase its number of edges to $m$ without affecting this sequence of parametric spanning trees.
\end{proof}

\section{Conclusions}

We have shown that the number of parametric minimum spanning trees can be $\Omega(m\log n)$ in the worst case, improving a 25-year-old $\Omega\bigl(m\alpha(n)\bigr)$ lower bound. Because of the structure of the graphs used in our lower bound construction, the new lower bound applies as well to the special cases of planar graphs and of bounded-treewidth graphs, both of which can have $\Omega(n\log n)$ parametric minimum spanning trees. However, our new lower bound is still far from the $O(mn^{1/3})$ upper bound, so there is plenty of room for additional improvement.

Another related question concerns the \emph{parametric bottleneck shortest path problem}, a parametric version of the problem of finding a path between two specified vertices that minimizes the maximum edge weight on the path. In the non-parametric version of the problem, a minimum spanning tree path is an optimal path, although faster algorithms are possible and the problem is also of interest in the case of directed graphs~\cite{GabTar-Algs-88}. The same problem is also known in the equivalent maximin form as the widest path problem, where an optimal solution can be found as a maximum spanning tree path~\cite{Pol-OR-60}. The parametric versions of these problems differ somewhat: a breakpoint in the piecewise linear parametric minimum spanning tree function (the function mapping the parameter value $\lambda$ to the weight of its minimum spanning tree) might not be a breakpoint in the bottleneck shortest path problem (the maximum weight of an edge on the bottleneck shortest path problem) or vice versa. However, the bottleneck breakpoints that look locally like the minimum of two linear functions do correspond to breakpoints of the minimum spanning tree problem. For this reason, any asymptotic lower bound on the parametric bottleneck shortest path problem would also be a lower bound for parametric minimum spanning trees, and any asymptotic upper bound on the parametric minimum spanning tree problem (including the known $O(mn^{1/3})$ bound) is also an upper bound on parametric bottleneck shortest paths. In fact, our previous $\Omega\bigl(m\alpha(n)\bigr)$ lower bound also applies to parametric bottleneck shortest paths, but our new $\Omega(m\log n)$ bound does not. Can we strengthen the $\Omega\bigl(m\alpha(n)\bigr)$ bound for this problem?

\bibliographystyle{plainurl}
\bibliography{parametric}
\end{document}